\newcommand{\be}{\begin{equation}}
\newcommand{\en}{\end{equation}}
\newcommand{\bea}{\begin{eqnarray}}
\newcommand{\ena}{\end{eqnarray}}
\newcommand{\beano}{\begin{eqnarray*}}
\newcommand{\enano}{\end{eqnarray*}}
\newcommand{\bee}{\begin{enumerate}}
\newcommand{\ene}{\end{enumerate}}
\newcommand{\J}{{\mathfrak J}}
\newcommand{\mc}{\mathcal}
\newcommand{\F}{{\cal F}}
\newcommand{\GG}{\mc G}
\newcommand{\1}{1 \!\! 1}
\newcommand{\Hil}{\mc H}
\newtheorem{thm}{Theorem}
\newtheorem{prop}[thm]{Proposition}
\newtheorem{defn}[thm]{Definition}
\newenvironment{proof}{\noindent {\bf Proof --}}{\hfill$\square$ \vspace{3mm}\endtrivlist}
\begin{document}

\thispagestyle{empty}

\vspace*{2cm}

\begin{center}
{\Large \bf Non-isospectral Hamiltonians, intertwining operators and hidden hermiticity}   \vspace{2cm}\\

{\large F. Bagarello}\\
  Dipartimento di Metodi e Modelli Matematici,
Facolt\`a di Ingegneria,\\ Universit\`a di Palermo, I-90128  Palermo, Italy\\
e-mail: bagarell@unipa.it

\end{center}

\vspace*{2cm}

\begin{abstract}
\noindent We have recently proposed a  strategy to produce, starting from a given hamiltonian $h_1$ and a certain operator $x$ for which $[h_1,xx^\dagger]=0$ and $x^\dagger x$ is invertible, a second hamiltonian $h_2$ with the same eigenvalues as $h_1$ and whose eigenvectors are related to those of $h_1$ by $x^\dagger$. Here we extend this procedure to build up a second hamiltonian, whose eigenvalues are different from those of $h_1$, and whose eigenvectors are still related as before. This new procedure is also extended to crypto-hermitian hamiltonians.

\end{abstract}

\vspace{2cm}

%{\bf PACS Numbers}:  .......

\vfill

%\pagenumbering{roman}

\newpage

\section{Introduction}

The problem of finding quantum system for which the Schr\"odinger equation can be solved exactly is not an easy task: obtaining the eigenvalues and the eigenvectors of a given hamiltonian is often possible only at a perturbative level. Hence, finding new potentials for which some non-perturbative solution can be found is a rather useful goal in theoretical physics, which has produced many results in the so-called super-symmetric quantum mechanics (SUSY qm), \cite{CKS}, and in the theory of intertwining operators (IO), see \cite{spi} and references therein.
Recently, \cite{bag1,baglast,bag4}, we have proposed a simple technique which produces, starting from a given hamiltonian $h_1$ with known eigensystem, a second hamiltonian $h_2$ with the same eigenvalues and eigenvectors which are easily deduced from those of $h_1$. In other words, this means that we produce, from a {\em solvable potential}, a second quantum potential which is also solvable. This is what, in the literature,  is called a  {\em Darboux transform}. Our technique relies, in particular, on the possibility of inverting  a certain operator $N_2$, see below. In this paper we show that, if $N_2$ is not invertible, the same general strategy could be used to construct, out of $h_1$, a second hamiltonian, which we again call $h_2$, whose eigenstates and eigenvalues are different from those of $h_1$ but which however can be easily computed from these ones. Moreover, within our new results, the requirement of $h_1$ to be self-adjoint is not really necessary and can be replaced by its crypto-hermiticity, in the sense of Znojil, \cite{znojil}: $h_1=\Theta^{-1}h_1^\dagger\Theta$, for a certain operator $\Theta$. In this case, we will show that  the  procedure discussed in Section II can be easily modified and several interesting results can still be obtained. In particular, we will see how to construct several sets of complete vectors in the Hilbert space $\Hil$, as well as several intertwining operators. It is maybe useful to stress here that crypto-hermitian quantum mechanics and its many {\em relatives}, see \cite{bagzno} and references therein, is nowadays a rather hot topic, both for its theoretical aspects and for some recent experiments which seem to fit well in this new scheme, \cite{mak}.

This paper is organized as follows: in the next section we discuss the general strategy and some of its consequences. In Section III we discuss two examples in which the eigenvalues of $h_1$ have multiplicity one. In Section IV we give examples with multiplicity larger than one. In Section V we extend our construction to crypto-hermitian hamiltonians. Section VI contains our conclusions.

\section{The strategy}

In a recent paper the possibility of constructing, from  a self-adjoint hamiltonian $h_1$, a second hamiltonian with the same eigenvalues and related eigenvectors have been discussed in detail,  \cite{bag1,baglast,bag4}. Let $h_1$ be a self-adjoint hamiltonian on the Hilbert space $\Hil_1$, $h_1=h_1^\dagger$, whose (not necessarily normalized) eigenvectors, $\varphi_n^{(1)}$, satisfy the following equation: $h_1\varphi_n^{(1)}=\epsilon_n\varphi_n^{(1)}$, $n\in\Bbb{N}_0:=\Bbb{N}\cup\{0\}$. Let $\Hil_2$ be a second Hilbert space, in general different from $\Hil_1$, and let us consider an operator $x:\Hil_2\rightarrow\Hil_1$, whose adjoint $x^\dagger$ maps $\Hil_1$ in $\Hil_2$. Let us further define $
N_1:=x\,x^\dagger$ and $N_2:=x^\dagger x$. These are surely well defined if $x$ is a bounded operator. On the other hand, if $x$ is unbounded, $N_2$ is well defined,   if, taken $f$ in the domain of $x$, $D(x)$, $xf\in D(x^\dagger)$. Analogously, $N_1$ is well defined if, taken $g$ in $D(x^\dagger)$, $x^\dagger g$ belongs to $D(x)$.
It is clear that $N_j$ maps $\Hil_j$ into itself, for $j=1,2$. Suppose now that $x$ is such that $N_2$ is invertible in $\Hil_2$ and that
$[N_1,h_1]=0$.
Of course, this commutator should be considered in a {\em weak form} if $h_1$ or $N_1$ is unbounded: $<N_1f,h_1g>_1=<h_1f,N_1g>_1$, for $f,g\in D(N_1)\cap D(h_1)$. Here $<,>_1$ is the scalar product in $\Hil_1$. Defining now
\be
\tilde h_2:=N_2^{-1}\left(x^\dagger\,h_1\,x\right),\qquad \varphi_n^{(2)}=x^\dagger\varphi_n^{(1)},
\label{21}\en
in \cite{bag1,baglast,bag4} it is shown that $\tilde h_2$ is self-adjoint,
 $\tilde h_2=\tilde h_2^\dagger$, that it satisfies the following modified version of intertwining relation
 $x^\dagger\left(x\,\tilde h_2-h_1\,x\right)=0$ and that,
 if $\varphi_n^{(2)}\neq 0$,  then $\tilde h_2\varphi_n^{(2)}=\epsilon_n\varphi_n^{(2)}$. Also, $[N_2,\tilde h_2]=0$, again in a weak form, in general.
Furthermore, if $\epsilon_n$ is non degenerate,  $\varphi_n^{(1)}$  and $\varphi_n^{(2)}$ are respectively eigenstates of $N_1$ and $N_2$  with the same eigenvalue.

In \cite{bag1,baglast,bag4} we have proposed several examples of this construction, some arising from the theory of the (g)-frames and some from quons, \cite{moh}. In particular, we have shown that it is convenient, if possible, to avoid any explicit representation of the operators involved in the definition of $h_1$ and $x$ and work, as much as possible, at a purely algebraic level, playing with the commutation relations. The main difficulty in the cited papers is the computation of $N_2^{-1}$, which however is  rarely needed (in the examples considered) since it usually disappears after some  re-ordering of the operators of $\tilde h_2$. But when this is not so, then some Green's function should be computed and this may not be a simple task, from a practical point of view. Therefore, the computation of $N_2^{-1}$ makes our strategy difficult to be applied. More than this, it could also happen that  $N_2$ has no inverse at all! Hence a natural question to answer is the following: if $N_2^{-1}$ does not exist, or if we are not able to compute it, are we still able to deduce a second {\em quantum system} whose eigenvalues and eigenvectors can be easily found? Luckily enough, the answer is affirmative, and this can be done with a simple extension of the  idea sketched above.

\vspace{3mm}

Let $h_1$ be, as before, a self-adjoint hamiltonian on the Hilbert space $\Hil_1$, $h_1=h_1^\dagger$, with normalized eigenvectors $\varphi_{n,k}^{(1)}$: $h_1\varphi_{n,k}^{(1)}=\epsilon_{n}^{(1)}\varphi_{n,k}^{(1)}$, $n\in\Bbb{N}_0$ and $k=1,2,\ldots m_n$, $m_n$ being the degeneracy of $\epsilon_{n}$. To simplify the notation, we will call $\J$ the set of these quantum numbers. Let $\Hil_2$ be a second Hilbert space, in general different from $\Hil_1$, and let  $x$ be an  operator from $\Hil_2$ to $\Hil_1$. In this paper we will mainly consider the case of $x$ bounded, but quite often  we will also comment on what happens for unbounded $x$. Let us further define $
N_1:=x\,x^\dagger$, $N_2:=x^\dagger x$. As already discussed, these operators are well defined if $x$ is bounded while some extra requirement has to be assumed for unbounded $x$. We assume here first that, for such $x$,
$[N_1,h_1]=0$ (in a weak sense, if needed). Nothing will be assumed on the existence of $N_2^{-1}$. Because of the commutativity between $h_1$ and $N_1$  $\F_1=\left\{\varphi_{n,k}^{(1)}, \,(n,k)\in J\right\}$ can be taken to be a family of eigenstates of $N_1$ as well, $N_1\varphi_{n,k}^{(1)}=\nu_{n,k}\varphi_{n,k}^{(1)}$, $\forall (n,k)\in \J$. We will assume here that, using the language of physicists, $h_1$ and $N_1$ are a {\em complete set of commuting observables}. In other words, the set $\F_1$ is a basis of $\Hil_1$, which is clearly  orthonormal (o.n.): $\left<\varphi_{n,k}^{(1)},\varphi_{m,l}^{(1)}\right>_1=\delta_{n,m}\delta_{k,l}$. This is not a big requirement since, if it is not true, we could always replace the {\em original} Hilbert space $\Hil_1$ with a {\em new} one, $\tilde\Hil_1$, constructed taking the closure of the linear span of $\F_1$. The closure relation of $\F_1$ in bra-ket language reads as follows:
\be
\sum_{(k,n)\in\J}\left|\varphi_{n,k}^{(1)}\left>\right<\varphi_{n,k}^{(1)}\right|=\1_1,
\label{23}\en
where $\1_1$ is the identity operator in $\Hil_1$. Due to the definition of $N_1$ it is clear that its eigenvalues $\nu_{n,k}$ cannot be negative. Indeed we find that $\nu_{n,k}=\|x^\dagger\varphi_{n,k}^{(1)}\|_2^2$, $\|.\|_2$ being the norm in $\Hil_2$, which is always positive and is zero if and only if $\varphi_{n,k}^{(1)}\in \ker(x^\dagger)$. Therefore, if  $\ker(x^\dagger)=\{0\}$, then $\nu_{n,k}>0$ for all $(n,k)\in\J$ and, as a consequence, $N_1$ admits inverse.

Let us now define
\be
 h_2:=x^\dagger\,h_1\,x,\qquad \varphi_{n,k}^{(2)}:=x^\dagger\varphi_{n,k}^{(1)}, \qquad \epsilon_{n,k}^{(2)}:=\epsilon_{n}^{(1)}\nu_{n,k}.
\label{24}\en
Notice that, in principle, $\varphi_n^{(2)}$ could be zero if $\ker(x^\dagger)\neq\{0\}$. To begin with, the following properties can be easily deduced:
\be
h_2=h_2^\dagger,\qquad [h_2,N_2]=0,\qquad N_1\,x=x\,N_2,\qquad \left(h_1N_1\right)\,x=x\,h_2.
\label{25}\en
With our definitions, therefore, $h_2$ is also self-adjoint and commutes with $N_2$ (weakly, if needed). From this point of view, $h_2$ and $N_2$ behave exactly as $h_1$ and $N_1$. Moreover, $x$ intertwines between $N_1$ and $N_2$, as well as between $h_1\,N_1$ and $h_2$, and this will have consequences on the related eigensystems. In particular we have that, if $\varphi_{n,k}^{(1)}\notin \ker(x^\dagger)$, then the non zero vector $\varphi_{n,k}^{(2)}$ satisfies the following eigenvalue equations
\be
h_2\varphi_{n,k}^{(2)}=\epsilon_{n,k}^{(2)}\varphi_{n,k}^{(2)},\qquad N_2\varphi_{n,k}^{(2)}=\nu_{n,k}\varphi_{n,k}^{(2)},
\label{26}\en
whose proofs are straightforward. Formula (\ref{26}) shows a first difference between $h_1$ and $h_2$: while the first has degenerate eigenvalues, in general, the eigenvalues of $h_2$ are not degenerate. In view of what has been discussed before we can also say that $\nu_{n,k}=0$ if and only if $\varphi_{n,k}^{(2)}=0$. Hence, if $\nu_{n,k}>0$, we can deduce that $\varphi_{n,k}^{(2)}\notin\ker(x)$ and, more than this, that
\be
\varphi_{n,k}^{(1)}=\frac{1}{\nu_{n,k}}\,x\,\varphi_{n,k}^{(2)},
\label{27}\en
which, in a sense, reverse the second equation in (\ref{24}).

Let now $\J'=\{(n,k)\in\J: \nu_{n,k}>0\}$. A consequence of the orthonormality of the set $\F_1$ is that also the functions of $\F_2=\left\{\varphi_{n,k}^{(2)}, \,(n,k)\in \J'\right\}$ are orthogonal but not normalized, in general. Indeed we have, taking $(n,k), (m,l)\in \J'$,
$$
\left<\varphi_{n,k}^{(2)},\varphi_{m,l}^{(2)}\right>_2=\left<x^\dagger\varphi_{n,k}^{(1)},
x^\dagger\varphi_{m,l}^{(1)}\right>_2=\left<N_1\varphi_{n,k}^{(1)},\varphi_{m,l}^{(1)}\right>_1=
\nu_{n,k}\delta_{n,m}\delta_{k,l}.
$$
Here, obviously, $<,>_2$ is the scalar product in $\Hil_2$. Let us now prove the following result:

\begin{prop}
 $\ker(x)=0$ if and only if $\F_2$ is complete in $\Hil_2$.
\end{prop}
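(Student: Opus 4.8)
The plan is to recast completeness of $\F_2$ as a statement about orthogonal complements and then transport it across $x$ to the orthonormal basis $\F_1$. Completeness of $\F_2$ in $\Hil_2$ means precisely that the only $g\in\Hil_2$ satisfying $\left<g,\varphi_{n,k}^{(2)}\right>_2=0$ for every $(n,k)\in\J'$ is $g=0$. The bridge between the two Hilbert spaces is the elementary identity
$$
\left<g,\varphi_{n,k}^{(2)}\right>_2=\left<g,x^\dagger\varphi_{n,k}^{(1)}\right>_2=\left<xg,\varphi_{n,k}^{(1)}\right>_1,
$$
which holds for all $g$ when $x$ is bounded (and, for unbounded $x$, whenever $g\in D(x)$ and $\varphi_{n,k}^{(1)}\in D(x^\dagger)$). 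I would establish the equivalence by proving the two implications separately, in each case reading this identity in the appropriate direction.

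For the implication $\ker(x)=\{0\}\Rightarrow\F_2$ complete, I would take a $g\in\Hil_2$ orthogonal to all of $\F_2$, so that $\left<xg,\varphi_{n,k}^{(1)}\right>_1=0$ for every $(n,k)\in\J'$. The key step is to observe that the indices left out of $\J'$ cost nothing: if $(n,k)\in\J\setminus\J'$ then $\nu_{n,k}=0$, hence $\varphi_{n,k}^{(1)}\in\ker(x^\dagger)$, and therefore $\left<xg,\varphi_{n,k}^{(1)}\right>_1=\left<g,x^\dagger\varphi_{n,k}^{(1)}\right>_2=0$ automatically. Consequently $xg$ is orthogonal to the whole orthonormal basis $\F_1$, so the closure relation (\ref{23}) forces $xg=0$, i.e. $g\in\ker(x)=\{0\}$. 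For the converse, assuming $\F_2$ complete, I would take any $g\in\ker(x)$ and read the same identity the other way: $\left<g,\varphi_{n,k}^{(2)}\right>_2=\left<xg,\varphi_{n,k}^{(1)}\right>_1=0$ for all $(n,k)\in\J'$, so $g$ is orthogonal to $\F_2$ and completeness gives $g=0$.

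The only genuinely delicate point, and hence the step I would flag as the main obstacle, is the book-keeping over the restricted index set $\J'$ rather than the full $\J$. The argument works precisely because the dropped indices correspond to eigenvectors $\varphi_{n,k}^{(1)}$ lying in $\ker(x^\dagger)$, so that orthogonality to the (incomplete) family $\F_2$ automatically upgrades to orthogonality to the complete family $\F_1$; without this remark one could not invoke (\ref{23}). A secondary caveat, relevant only outside the bounded case emphasized here, is that the adjoint manipulation $\left<g,x^\dagger\varphi\right>_2=\left<xg,\varphi\right>_1$ must be justified on the relevant domains, which is why the boundedness of $x$ makes the whole argument transparent.
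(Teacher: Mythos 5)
Your proof is correct, and it is in one respect tighter than the paper's own. The paper proves the proposition by splitting into two cases, $\J=\J'$ and $\J'\subset\J$, and in the second case it treats, ``to be concrete,'' only a single vanishing eigenvalue $\nu_{0,0}=0$: there it argues that a vector $f$ orthogonal to $\F_2$ must satisfy $xf=\alpha\varphi_{0,0}^{(1)}$, and then kills $xf$ through $\|xf\|_2^2=\left<f,x^\dagger x f\right>_2=\alpha\left<f,x^\dagger\varphi_{0,0}^{(1)}\right>_2=0$. Your single observation --- that every index dropped from $\J'$ corresponds to $\varphi_{n,k}^{(1)}\in\ker(x^\dagger)$, so orthogonality of $g$ to the incomplete family $\F_2$ upgrades for free to orthogonality of $xg$ to the entire o.n.\ basis $\F_1$ --- collapses both cases into one argument and covers arbitrarily many vanishing $\nu_{n,k}$, which the paper's concrete assumption sidesteps rather than handles. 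It also avoids forming the product $x^\dagger x f$, which for unbounded $x$ would require the extra domain hypothesis $xf\in D(x^\dagger)$; your route needs only the adjoint identity on $D(x)$. The backward implication is essentially identical in the two proofs, modulo the paper's cosmetic detour of expanding $xf$ in $\F_1$ before taking scalar products. The caveat you flag about unbounded $x$ is shared by the paper: to conclude completeness one must rule out nonzero vectors orthogonal to $\F_2$ in all of $\Hil_2$, not merely in $D(x)$, and the paper's appeal to density of $D(x)$ is no more rigorous than your parenthetical remark; for bounded $x$, the paper's main setting, both arguments are complete as stated.
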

\begin{proof}
We divide the proof of the statement in two parts: $\J=\J'$ and $\J'\subset\J$.

\vspace{2mm}{\bf First case: $\J=\J'$.}

As we have already discussed, since $\J=\J'$,   $\nu_{n,k}>0$ for all $(n,k)\in\J$.  Let us  prove first that, if $\ker(x)=0$ then $\F_2$ is complete in $\Hil_2$. For that we consider a vector $f\in D(x)\subseteq\Hil_2$ orthogonal to all the $\varphi_{n,k}^{(2)}$'s: $\left<f,\varphi_{n,k}^{(2)}\right>_2=0$, $\forall (n,k)\in\J$. Here  the domain of $x$, $D(x)$,  can be taken coincident with $\Hil_2$ itself if $x$ is bounded. Otherwise $D(x)$ is dense in $\Hil_2$; we want to deduce that $f=0$. First we notice that $xf=0$, since $\F_1$ is complete in $\Hil_1$ by assumption. But $\ker(x)=\{0\}$. Hence $f=0$. This ends the proof for $x$ bounded, while, if $x$ is unbounded, we simply use the density of $D(x)$ in $\Hil_2$.

\vspace{2mm}

Let us now suppose that $\F_2$ is complete. We will show that  $\ker(x)=\{0\}$. Let $f\in\ker(x)$. Hence $xf=0$.  Since $xf\in\Hil_1$ and since $\F_1$ is an o.n. basis of $\Hil_1$ we have
$$
xf=\sum_{(n,k)\in\J}\left<xf,\varphi_{n,k}^{(1)}\right>_1\varphi_{n,k}^{(1)}=
\sum_{(n,k)\in\J}\left<f,\varphi_{n,k}^{(2)}\right>_2\varphi_{n,k}^{(1)},
$$
where we have used the definition of $\varphi_{n,k}^{(2)}$ and the fact that $\J=\J'$. Then, taking the scalar product of the above expansion with $\varphi_{m,l}^{(1)}$ and recalling that $xf=0$, we deduce that, $\forall (n,k)\in\J$,
$$
0=\left<f,\varphi_{m,l}^{(2)}\right>_2,
$$
which implies that $f=0$ since $\F_2$ is complete by assumption. Hence $\ker(x)=\{0\}$.

\vspace{2mm}{\bf Second case: $\J'\subset\J$.}

To be  concrete, we will assume here that $\nu_{0,0}=0$, while all the others $\nu_{n,k}$ are strictly positive. Let us prove first that, also in this case, if $\ker(x)=\{0\}$, then $\F_2$ is complete. Let us assume that $\left<f,\varphi_{n,k}^{(2)}\right>_2=0$ for all $(n,k)\in\J'$. This implies that $xf=\alpha\varphi_{0,0}^{(1)}$, for some complex $\alpha$. Hence, $x^\dagger\,x\,f=\alpha\,x^\dagger\varphi_{0,0}^{(1)}=0$, since $\nu_{0,0}=0$, so that $\left<f,x^\dagger\,x\,f\right>_2=\|x\,f\|_2^2=0$. This implies that $f\in \ker(x)$, which only contains the zero vector by assumption. Hence $f=0$ and $\F_2$ is complete, as a consequence.

Let us now assume that $\F_2$ is complete. Then, if $f\in\ker(x)$, $xf=0$. Repeating the same steps  as before we have $0=xf=\sum_{(n,k)\in\J'}\left<f,\varphi_{n,k}^{(2)}\right>_2\varphi_{n,k}^{(1)}$, which obviously implies that $\left<f,\varphi_{m,l}^{(2)}\right>_2=0$ for all $\varphi_{m,l}^{(2)}\in\F_2$. But $\F_2$ is complete. Hence $f=0$ and $\ker(x)=\{0\}$.

\end{proof}

{\bf Remarks:--} (1) This Proposition is a somehow  {\em refined version} of a similar result contained in \cite{bag2010}. In particular we are here considering  the possibility of $x$ to be unbounded, and the possibility that not all the $\nu_{n,k}$ are strictly positive.

(2) There is an evident asymmetry between $\ker(x)$ and $\ker(x^\dagger)$ in this Proposition. The reason is clear: we are assuming that $\F_1$ is an o.n. basis of $\Hil_1$ and we have shown that, if $\ker(x)=\{0\}$, $\F_2$ is also a basis of $\Hil_2$. The role of $\ker(x)$ and $\ker(x^\dagger)$ would be exchanged if we start with the assumption that $\F_2$ is a basis and we ask for conditions which makes of the set $\F_1$ of vectors (\ref{27}) a basis of $\Hil_1$.

\section{Examples with multiplicity 1}

\subsection{standard bosons}

Let $a$ be the usual annihilation operator satisfying $[a,a^\dagger]=\1$, and let $\varphi_0$ be the vacuum of $a$: $a\varphi_0=0$. Then the set $\F:=\left\{\varphi_n:=\frac{1}{\sqrt{n!}}\,{a^\dagger}^n\varphi_0,\,n\geq0\right\}$ is an o.n. basis of $\Hil=\Hil_1=\Hil_2$. Let us further define $h_1=a^\dagger\,a=:\hat n$, the number operator. Hence $\varphi_n^{(1)}\equiv\varphi_n$ and $\epsilon_n^{(1)}=n$: $h_1\varphi_n^{(1)}=n\varphi_n^{(1)}$, $n=0,1,2,\ldots$. If we now take $x:=a^k$, for a fixed natural $k$, it is clear that $\ker(x)=\{\varphi_0, \varphi_1,\ldots,\varphi_{k-1}\}$, so that the set $\F_2$ will not be complete in $\Hil$. This can be seen explicitly since the vectors of $\F_2$ are the following: $\varphi_0^{(2)}=x^\dagger \varphi_0^{(1)}=\sqrt{k!}\,\varphi_k^{(1)}$, $\varphi_1^{(2)}=x^\dagger \varphi_1^{(1)}=\sqrt{(k+1)!}\,\varphi_{k+1}^{(1)}$, and so on. It is clear that, for instance, the non zero vector $\varphi_0^{(1)}$ is orthogonal to all the $\varphi_n^{(2)}$'s, so that $\F_2$ cannot be complete.
On the other hand, we can also check that $\ker(x^\dagger)=\{0\}$. The operator $N_1=xx^\dagger=a^k\,{a^\dagger}^k$ commutes with $h_1$ for all $k$, as can be checked using induction on $k$. Moreover, it is interesting to notice that $N_2={a^\dagger}^k{a}^k$ admits no inverse. Hence, the procedure proposed in \cite{bag1} does not apply here.

The hamiltonian $h_2=x^\dagger h_1 x={a^\dagger}^{k+1}\,a^{k+1}$ can be written in terms of the number operator $\hat n$ as follows:
\be
h_2=\left(\hat n-k\1\right)\left(\hat n-(k-1)\1\right)\cdots \left(\hat n-2\1\right)\left(\hat n-\1\right)\hat n,
\label{31}\en
whose proof can be again given by induction. The vectors $\varphi_n^{(2)}=x^\dagger \varphi_n^{(1)}=\sqrt{\frac{(k+n)!}{n!}}\,\varphi_{n+k}$ are eigenstates of $h_2$ with eigenvalues $\epsilon_n^{(2)}:=\frac{(n+k)!}{(n-1)!}$. Not surprisingly, $h_2$, $\varphi_n^{(2)}$ and $\epsilon_n$ all depend on $k$, which is a consequence of the fact that $x$ itself depends on $k$. We also find
$$
N_1\varphi_n^{(1)}=\nu_n\varphi_n^{(1)}, \qquad \nu_n=\frac{(n+k)!}{n!},
$$
so that, as expected, $\nu_n\epsilon_n^{(1)}=\epsilon_n^{(2)}$. Moreover $N_2=x^\dagger x$ coincides with $h_2$ with $k$ replaced by $k-1$. Hence it is clear that $[h_2,N_2]=0$. It is a simple exercise to check that all the properties listed in Section II are satisfied.

\subsection{generalizing this example}

Following \cite{bag4} we consider two operators, $B$ and $B^\dagger$, which satisfy the modified commutation relation $[B,B^\dagger]_q:=B\,B^\dagger-q B^\dagger B=\1$, $q\in[0,1]$. Let $\varphi_0^{(1)}$ be the vacuum of $B$: $B\varphi_0^{(1)}=0$. Let furthermore $h_1=B^\dagger B$. Then, putting
\be\varphi_n^{(1)}=\frac{1}{\beta_0\cdots\beta_{n-1}}\,{B^\dagger}^n\,\varphi_0^{(1)}=
\frac{1}{\beta_{n-1}}B^\dagger\varphi_{n-1}^{(1)},\qquad n\geq 1,\label{21ex}\en
we have $h_1\varphi_n^{(1)}=\epsilon_n^{(1)}\varphi_n^{(1)}$, with $\epsilon_0^{(1)}=0$, $\epsilon_1^{(1)}=1$ and $\epsilon_n^{(1)}=1+q+\cdots+q^{n-1}$ for $n\geq 1$. Also, the normalization is found to be $\beta_n^2=1+q+\cdots+q^n$, for all $n\geq0$. Hence $\epsilon_n^{(1)}=\beta_{n-1}^2$ for all $n\geq1$.  The set of the $\varphi_n^{(1)}$'s  spans the Hilbert space $\Hil=\Hil_1=\Hil_2$ and they are mutually o.n.: $<\varphi_{n}^{(1)},\varphi_{k}^{(1)}>=\delta_{n,k}$.

We now take, as for the ordinary bosons discussed before, $x={B}^k$. Again, its kernel is different from the zero vector. Hence $\F_2$ is not complete. The operator  $N_1=B^k{B^\dagger}^k$ commutes with $h_1$ for all fixed $k$. This can be seen, for instance, using the following recurrence relation: $N_1^{(k+1)}=N_1^{(k)}\left(q^k N_1^{(1)}+\epsilon_k^{(1)}\right)$, where we have introduced the suffix $k$ to make explicit the dependence on $k$ here. The eigenvalues of $N_1=N_1^{(k)}$ obey the following
recurrence rule: $\nu_n^{(1)}=1+q\epsilon_n^{(1)}$, $\nu_n^{(k+1)}=\nu_n^{(k)}(q^{k+1}\epsilon_n^{(1)}+\epsilon_{n+1}^{(1)})$. Notice that, in the limit $q\rightarrow 1^-$, we find $\epsilon_n^{(1)}\rightarrow n$, $\nu_n^{(1)}\rightarrow 1+n$.

With the usual definitions we find that $h_2=(B^\dagger)^{k+1}B^{k+1}$ and that $\varphi_n^{(2)}=x^\dagger\varphi_n^{(1)}$ coincides, but for a normalization, with $\varphi_{n+k}^{(1)}$ (which again shows that $\F_2$ is not complete). It is again a matter of simple but boring computations to check that all the properties of the previous section are satisfied.

\section{Examples from Landau levels}

Let us consider an  electron in a uniform magnetic field oriented in the positive $z$-direction, with vector potential $\vec A^\uparrow=\frac{B}{2}(-y,x,0)$. Its hamiltonian, $H^\uparrow=\frac{1}{2m}\left(\vec p-\frac{e}{c}\,\vec A^\uparrow\right)^2$, can be written as
\be
H^\uparrow=H_0+H_1^\uparrow=\hbar\omega\left(N_++N_-+\1\right)+\hbar\omega\left(N_--N_+\right)=\hbar\omega\left(2N_-+\1\right).
\label{41}\en
Here we have introduced $\omega=\frac{eB}{2mc}$, $a_x=\sqrt{\frac{m\omega}{2\hbar}}\,x+i\frac{1}{\sqrt{2m\omega\hbar}}\,p_x$, $a_y=\sqrt{\frac{m\omega}{2\hbar}}\,y+i\frac{1}{\sqrt{2m\omega\hbar}}\,p_y$, $A_\pm=\frac{1}{\sqrt{2}}\left(a_x\mp ia_y\right)$ and $N_\pm=A_\pm^\dagger A_\pm$. These operators satisfy the canonical commutation relations: $[a_x,a_x^\dagger]=[a_y,a_y^\dagger]=[A_\pm,A_\pm^\dagger]=\1$, all the other commutators being zero. Then, taking $\Psi_{0,0}$ such that $A_\pm\Psi_{0,0}=0$ and $\Psi_{n_+,n_-}:=\frac{1}{\sqrt{n_+!n_-!}}(A_+^\dagger)^{n_+}(A_-^\dagger)^{n_-}\Psi_{0,0}$, $n_\pm=0,1,2,\ldots$, we have
$$\left\{
\begin{array}{ll}
N_+\Psi_{n_+,n_-}=n_+\Psi_{n_+,n_-}; \quad N_-\Psi_{n_+,n_-}=n_-\Psi_{n_+,n_-}, \\
H_0\Psi_{n_+,n_-}=\hbar\omega\left(n_++n_-+1\right)\Psi_{n_+,n_-}, \\ H_1^\uparrow\Psi_{n_+,n_-}=\hbar\omega\left(n_--n_+\right)\Psi_{n_+,n_-},
\end{array}
\right.
$$
which implies that $H^\uparrow\Psi_{n_+,n_-}=\hbar\omega\left(2n_-+1\right)\Psi_{n_+,n_-}$.
If we rather start with a magnetic field in the negative $z$-direction, $\vec A^\downarrow=\frac{B}{2}(y,-x,0)$, since $H^\downarrow=H_0+H_1^\downarrow= H_0-H_1^\uparrow$, the eigenstates are again $\Psi_{n_+,n_-}$. In particular we find
$$\left\{
\begin{array}{ll}
H_1^\downarrow\Psi_{n_+,n_-}=\hbar\omega\left(n_+-n_-\right)\Psi_{n_+,n_-}, \\
H^\downarrow\Psi_{n_+,n_-}=\hbar\omega\left(2n_++1\right)\Psi_{n_+,n_-}. \\
\end{array}
\right.
$$
We conclude that the eigenvalues of both $H^\uparrow$ and $H^\downarrow$ are degenerate (each with infinite degeneracy), so that what discussed in Section II can be applied. Before going on we also need to define a map, considered for instance in \cite{abg}, which works like this: $j(\Psi_{n_+,n_-})=\Psi_{n_-,n_+}$, for all $n_+,n_-$. Hence it is easily seen that $j$ intertwines between $H^\uparrow$ and $H^\downarrow$: $j\,H^\uparrow=H^\downarrow \,j$.

\subsection{A first choice of $x$}

We take here $h_1:=H^\uparrow=\hbar\omega\left(2N_-+\1\right)$ and $x=A_+A_-$. With this choice $N_1=xx^\dagger=(N_++\1)(N_-+\1)$ and $[h_1,N_1]=0$. It is clear that $\varphi_{\bf n}^{(1)}:=\Psi_{\bf n}$, ${\bf n}=(n_+,n_-)$, is an eigenstate of $h_1$ and $N_1$, with eigenvalues $\epsilon_{n_-}^{(1)}:=\hbar\omega\left(2n_-+1\right)$ and $\nu_{\bf n}:=(n_++1)(n_-+1)$, respectively. It is clear that $\nu_{\bf n}>0$ and that $x^\dagger\varphi_{\bf n}^{(1)}\neq 0$, for all $\bf n$.

Defining now $h_2=x^\dagger h_1 x$, $N_2=x^\dagger x$, and $\varphi_{\bf n}^{(2)}=x^\dagger\varphi_{\bf n}^{(1)}$, and playing a little bit with the commutation relations, we deduce that $$h_2=\hbar\omega\,N_+ N_-\left(2N_--\1\right),\quad \varphi_{\bf n}^{(2)}=\sqrt{(n_++1)(n_-+1)}\,\Psi_{n_-+1,n_++1},\quad N_2=N_+N_-,$$ (which is not invertible, by the way!). Defining $\epsilon_{\bf n}^{(2)}=\epsilon_{n}^{(1)}\nu_{\bf n}=\hbar\omega(2n_-+1)(n_++1)(n_-+1)$, it is now easy to check that all our claims are satisfied. For instance $h_2=h_2^\dagger$, $[h_2,N_2]=0$, $N_1x=xN_2$, $(h_1N_1)x=xh_2$. The vectors $\varphi_{\bf n}^{(2)}$ satisfy $h_2\varphi_{\bf n}^{(2)}=\epsilon_{\bf n}^{(2)}\varphi_{\bf n}^{(2)}$ and $N_2\varphi_{\bf n}^{(2)}=\nu_{\bf n}^{(2)}\varphi_{\bf n}^{(2)}$. They are orthogonal but not normalized, in general: $\left<\varphi_{\bf n}^{(2)},\varphi_{\bf m}^{(2)}\right>=\nu_{\bf n}^{(2)}\delta_{{\bf n},{\bf m}}$. Moreover, since $\ker(x)=\{0,\Psi_{0,n_-},\Psi_{n_+,0};\,n_-,n_+\geq0\}$, which is infinite dimensional, the set $\F_2$ is not expected to be complete in $\Hil$. Indeed, this is so since we can check that, for instance, the non-zero vector $\Psi_{0,0}$ is orthogonal to the vectors in $\F_2$.

\subsection{a second choice of $x$: mixing the quantum numbers}

As before we take $h_1:=H^\uparrow=\hbar\omega\left(2N_-+\1\right)$. Now, to discuss the effect of the map $j$, we consider $x=A_+\,j$. Hence $N_1=N_++\1$, and $[h_1,N_1]=0$. Further we find $h_2=x^\dagger h_1 x=\hbar\omega N_-\left(2N_++\1\right)$, $N_2=N_-$ and $\varphi_{\bf n}^{(2)}=x^\dagger\varphi_{\bf n}^{(1)}=\sqrt{n_++1}\,\varphi_{n_-,n_++1}^{(1)}$. Notice that $N_1$ is degenerate as well, but the eigenvalues of $(h_1,N_1)$ together uniquely fix the eigenvector of the system. As in the previous example, all the properties stated in Section II are recovered explicitly. Moreover, since all the vectors $\varphi_{n_+,0}^{(1)}$ belong to the kernel of $X$, $\F_2$ is not complete. This is related to the fact that $\ker(x)\neq\{0\}$.

Notice that the quantum numbers $(n_+,n_-)$ in  $\varphi_{\bf n}^{(2)}$ appear to be exchanged with respect to those of $\varphi_{\bf n}^{(1)}$, and the second number is also changed by one unit. This is, in part, the effect of the map $j$ and, as discussed in \cite{abg}, can be related to the appearance of
 analytic and anti-analytic Hermite polynomials in the analysis of Landau levels.

\section{Crypto-hermiticity}

In this section we will show how losing the self-adjointness of $h_1$, rather than being a problem, gives rise to a lot of extra features enriching the structure, at least under suitable conditions. For this reason, we first recall what is meant by {\em crypto-hermiticity} of an operator, using the definition given in \cite{bagzno}:

\begin{defn}
Let us consider two operators $H$ and
$\Theta$ acting on the Hilbert space $\Hil$, with $\Theta$  positive
and invertible. Let  $H^\dagger$ be the adjoint of $H$ in $\Hil$
with respect to its scalar product and let
$H^\ddagger=\Theta^{-1}H^\dagger\Theta$, when this exists. We will
say that $H$ is crypto-hermitian with respect to $\Theta$
(CHwrt$\Theta$) if $H=H^\ddagger$.
\end{defn}

Notice that this definition reduces to the standard self-adjointness of $H$ if $\Theta=\1$. Using standard facts on functional calculus, the assumptions on $\Theta$ imply  that the operators $\Theta^{\pm 1/2}$ are well defined. Hence we can introduce another operator $h:=\Theta^{1/2}\,H\,\Theta^{-1/2}$, at least if the domains of the operators allow us to do so. More explicitly, $h$ is well defined if, taken $f\in D(\Theta^{-1/2})$,
$\Theta^{-1/2}f\in D(H)$ and if $H\,\Theta^{-1/2}f\in D(\Theta^{1/2})$. Of course,
these requirements are surely satisfied if $H$ and $\Theta^{\pm 1/2}$ are bounded. Otherwise some care is required. It is easy to check that $h=h^\dagger$.
\vspace{3mm}

The starting point of our analysis is now an operator $H_1$ which is not self-adjoint but which is CHwrt$\Theta$, $\Theta$ as above. Then $H^\ddagger=\Theta^{-1}H^\dagger\Theta=H$. Also, we assume that an operator $X$ exists such that, calling $N_1=XX^\ddagger$ and $N_2=X^\ddagger\,X$, we have, first of all, $[H_1,N_1]=0$. We notice that, being $\ddagger$ an adjoint map, $N_j=N_j^\ddagger$, $j=1,2$. In other words, $H_1$, $N_1$ and $N_2$ are all CHwrt$\Theta$. To simplify the analysis we will work in a single Hilbert space $\Hil$.
 All throughout  this section we will assume that the operator $h_1:=\Theta^{1/2}\,H_1\,\Theta^{-1/2}$ is well defined. In particular, this is so when $H_1$, $\Theta$ and $\Theta^{-1}$ are bounded. Then $h_1$ is self-adjoint, $h_1=h_1^\dagger$, and commutes with $\hat n_1:=\Theta^{1/2}\,N_1\,\Theta^{-1/2}$ which is also self-adjoint $\hat n_1=\hat n_1^\dagger$. Hence we have two commuting, self-adjoint, operators which can be simultaneously diagonalized. Therefore, there exists a family of vectors $\F^{(1)}_\varphi=\{\varphi_{n,k}^{(1)},\, (n,k)\in\J\}$, such that
\be
\left\{
\begin{array}{ll}
h_1\varphi_{n,k}^{(1)}=\epsilon_n^{(1)}\varphi_{n,k}^{(1)},\\
\hat n_1\varphi_{n,k}^{(1)}=\nu_{n,k}\varphi_{n,k}^{(1)},
\end{array}
\right.
\label{51}\en
for all $(n,k)\in\J$. We see that we are thinking of a possible degeneracy of $h_1$, degeneracy which is lifted by $\hat n_1$. We will assume that $\F^{(1)}_\varphi$ is an o.n. basis of $\Hil$ and that $\Theta^{\pm1/2}$ are bounded. Now, due to our assumptions on $\Theta$, it is clear that $\ker(\Theta^{\pm1})=\ker(\Theta^{\pm1/2})=\{0\}$. Hence, calling $\Phi_{n,k}^{(1)}=\Theta^{-1/2}\varphi_{n,k}^{(1)}$, the set $\F^{(1)}_\Phi=\{\Phi_{n,k}^{(1)},\, (n,k)\in\J\}$ is a Riesz basis of $\Hil$. It is also clear that they are eigenstates of $H_1$ and $N_1$:
\be
\left\{
\begin{array}{ll}
H_1\Phi_{n,k}^{(1)}=\epsilon_n^{(1)}\Phi_{n,k}^{(1)},\\
N_1\Phi_{n,k}^{(1)}=\nu_{n,k}\Phi_{n,k}^{(1)}.
\end{array}
\right.
\label{52}\en
The frame operator associated to  $\F^{(1)}_\Phi$ can be easily computed using the resolution of the identity for  $\F^{(1)}_\varphi$: $S_\Phi^{(1)}=\sum_{(k,n)\in\J}\left|\Phi_{n,k}^{(1)}\left>\right<\Phi_{n,k}^{(1)}\right|=\Theta^{-1}$. It is now very easy to construct a second Riesz basis, $\F^{(1)}_\eta=\{\eta_{n,k}^{(1)},\, (n,k)\in\J\}$, which is biorthogonal to $\F^{(1)}_\Phi$. Its vectors are defined as $\eta_{n,k}^{(1)}=\Theta^{1/2}\varphi_{n,k}^{(1)}=\Theta\Phi_{n,k}^{(1)}$, and we get, as expected, $S_\eta^{(1)}=\sum_{(k,n)\in\J}\left|\eta_{n,k}^{(1)}\left>\right<\eta_{n,k}^{(1)}\right|=\Theta={S_\Phi^{(1)}}^{-1}$. It is trivial to check that $\sum_{(k,n)\in\J}\left|\Phi_{n,k}^{(1)}\left>\right<\eta_{n,k}^{(1)}\right|=\1$ and that $\left<\Phi_{n,k}^{(1)},\eta_{m,l}^{(1)}\right>=\delta_{n,m}\delta_{k,l}$, as well as
\be
\left\{
\begin{array}{ll}
H_1^\dagger\eta_{n,k}^{(1)}=\epsilon_n^{(1)}\eta_{n,k}^{(1)},\\
N_1^\dagger\eta_{n,k}^{(1)}=\nu_{n,k}\eta_{n,k}^{(1)}.
\end{array}
\right.
\label{53}\en
These results reflect, essentially, those found in \cite{bagzno}. Here, however, these results are, in a certain sense, {\em doubled}. Indeed, extending what we have done in Section II, let us now define a new operator,
$H_2:=X^\ddagger H_1X$, and the new vectors $\Phi_{n,k}^{(2)}=X^\ddagger\Phi_{n,k}^{(1)}$, $(k,n)\in\J$. It is possible to extend to the present context properties analogous to those in (\ref{25}). In particular we find that
\be
H_2=H_2^\ddagger,\qquad [H_2,N_2]=0,\qquad N_1X=XN_2,\qquad H_1N_1X=XH_2.
\label{54}\en
Moreover, extending again the results of Section II, the set $\F^{(2)}_\Phi=\{\Phi_{n,k}^{(2)},\, (n,k)\in\J\}$ is complete in $\Hil$ if and only if $\ker({X^\ddagger}^\dagger)=\{0\}$, or, equivalently, if $\ker(X\Theta^{-1})=\{0\}$. Under this requirement, recalling that the different $\Phi_{n,k}^{(2)}$ are also linearly independent, it follows that $\F^{(2)}_\Phi$ is a basis of $\Hil$, whose frame operator can be written as follows:
\be
S_\Phi^{(2)}=\sum_{(k,n)\in\J}\left|\Phi_{n,k}^{(2)}\left>\right<\Phi_{n,k}^{(2)}\right|=X^\ddagger\Theta^{-1}{X^\ddagger}^\dagger=\Theta^{-1/2}X_\Theta^\dagger X_\Theta\Theta^{-1/2},
\label{55}\en
where $X_\Theta=\Theta^{1/2}X\Theta^{1/2}$. It is possible to check that $S_\Phi^{(2)}$ admits inverse. This is easily seen if, for simplicity, $D({X^\ddagger}^\dagger)=\Hil$ and if $\ker(X)=\{0\}$: in this case, for each $f\in\Hil$, $\left<f,S_\Phi^{(2)}f\right>=\left<g,\Theta^{-1}g\right>$, with $g:={X^\ddagger}^\dagger f$. Since $g\neq0$ and since $\ker(\Theta)=\{0\}$, $\left<f,S_\Phi^{(2)}f\right>>0$. Hence $S_\Phi^{(2)}$ can be inverted and, as a consequence of equation (\ref{55}),  also $(X_\Theta^\dagger X_\Theta)^{-1}$ exists in $\Hil$. Moreover, calling $\epsilon_{n,k}^{(2)}=\epsilon_{n}^{(1)}\nu_{n,k}$
\be
\left\{
\begin{array}{ll}
H_2\Phi_{n,k}^{(2)}=\epsilon_{n,k}^{(2)}\Phi_{n,k}^{(2)},\\
N_2\Phi_{n,k}^{(2)}=\nu_{n,k}\Phi_{n,k}^{(2)}.
\end{array}
\right.
\label{56}\en

Repeating here what we did above for the {\em first family of hamiltonians} $\GG_1:=(H_1,H_1^\dagger,h_1)$, we put  $h_2:={S_\Phi^{(2)}}^{-1/2}\,H_2\,{S_\Phi^{(2)}}^{1/2}$, $\hat n_2:={S_\Phi^{(2)}}^{-1/2}\,N_2\,{S_\Phi^{(2)}}^{1/2}$ and $\varphi_{n,k}^{(2)}={S_\Phi^{(2)}}^{-1/2}\varphi_{n,k}^{(1)}$, $(n,k)\in\J$\footnote{Notice that in the definitions of $\GG_1$ we used $\Theta^{-1}$ rather than $S_\Phi^{(1)}$ since they coincide.}.

The same statements concerning $\GG_1$ can now be extended to $\GG_2:=(H_2,H_2^\dagger,h_2)$, at least if $\ker(X\Theta^{-1})=\{0\}$. In this case, among the other properties, we can prove that $\F^{(2)}_\varphi=\{\varphi_{n,k}^{(2)},\, (n,k)\in\J\}$ is an o.n. basis of $\Hil$ and that $h_2=h_2^\dagger$. We can also construct the biorthogonal set $\F^{(2)}_\eta=\{\eta_{n,k}^{(2)},\, (n,k)\in\J\}$, with $\eta_{n,k}^{(2)}={S_\Phi^{(2)}}^{-1}\Phi_{n,k}^{(2)}={S_\Phi^{(2)}}^{-1/2}\varphi_{n,k}^{(2)}$, which are eigenstates of $H_2^\dagger$ and $N_2^\dagger$:
\be
\left\{
\begin{array}{ll}
H_2^\dagger\eta_{n,k}^{(2)}=\epsilon_{n,k}^{(2)}\eta_{n,k}^{21)},\\
N_2^\dagger\eta_{n,k}^{(2)}=\nu_{n,k}\eta_{n,k}^{(2)}.
\end{array}
\right.
\label{57}\en
In analogy with what we have done before, we can further define $S^{(2)}_\eta$ and we get $S^{(2)}_\eta={S_\Phi^{(2)}}^{-1}$. The only difference is that we don't know if $\F^{(2)}_\eta$ and $\F^{(2)}_\Phi$ are Riesz bases or not, since this is related to the boundedness of the operators $S^{(2)}_\eta$ and $S^{(2)}_\Phi$. As for the intertwining equations, our construction gives rise to many of them. We just list here the following:
$$
S_\Phi^{(1)}\,H_1^\dagger=H_1{S_\Phi^{(1)}},\qquad \mbox{ and }\qquad S_\Phi^{(2)}\,H_2^\dagger=H_2{S_\Phi^{(2)}}.
$$
Other relations involving $h_1$, $h_2$,  $S_\eta^{(1)}$ and $S_\eta^{(2)}$ can be easily deduced.

This section show how a rather rich framework can be constructed by just three main ingredients: an operator $\Theta$ positive and possibly bounded with bounded inverse, a second operator $H_1$ which is CHwrt$\Theta$, and, last but not least, a third operator, $X$, such that $[H_1,XX^\ddagger]=0$. While $\Theta$ and $H_1$ are all is needed in the construction of $\GG_1$, $X$ is the main ingredient to move to $\GG_2$, doubling the results originally deduced for $\GG_1$. We should stress that two interesting features {\em break the symmetry} between $\GG_1$ and $\GG_2$: the first one is that, while $H_1$ is degenerate, $H_2$ is not. This is because its eigenvalues depend on both $n$ and $k$. The second is a bit more subtle:  in the first part of the construction we move from $H_1$ and $N_1$ to the commuting self-adjoint operators $h_1$ and $\hat n_1$. Since they can be simultaneously diagonalized, we use the set $\F_\varphi^{(1)}$ of their eigenvectors to construct $\F_\Phi^{(1)}$ and, from this, $\F_\eta^{(1)}$, both being (Riesz) bases of $\Hil$. On the other hand, we give conditions for $\F_\Phi^{(2)}$ to be a basis of $\Hil$. Then this is automatically a set of eigenstates of $H_2$ and $N_2$, which are used to construct the rest of the structure, and in particular $\F_\varphi^{(2)}$ and $\F_\eta^{(2)}$. For instance, as already stated, $\F_\Phi^{(2)}$ and $\F_\eta^{(2)}$ are not guaranteed to be Riesz bases.

\vspace{3mm}

{\bf Remark:--} We end this section recalling that in \cite{bagzno} we have discussed the strong relations between crypto-hermitian operators and non-linear regular pseudo-bosons (NLRPB). This has an immediate consequence here: all the results discussed here could be restated for NLRPB as well. This aspect will not be considered here.

\section{Conclusions}

In this paper we have discussed a procedure to construct, starting from a self adjoint operator $h_1$ and a second operator $x$ such that $[h_1,xx^\dagger]=0$, another operator whose eigenvectors can be deduced from those of $h_1$. Some examples arising from bosons, quons and Landau levels have been discussed.

In the second part of the paper we have extended this construction to crypto-hermitian hamiltonians, showing that the procedure, in this case, can be still settled up and that the results are {\em doubled}.

\section*{Acknowledgements}

  The author acknowledges financial support by the Murst.

\end{document}